\definecolor{mediumblue}{rgb}{0.0, 0.0, 0.8}
\definecolor{mediumcandyapplered}{rgb}{0.0, 0.0, 0.8}
\definecolor{nazar}{rgb}{0.7, 0.5, 0.9}
\let\NAT@parse\undefined
\newcounter{theorem}
\newcounter{definition}
\newcounter{lemma}
\newcounter{claim}
\newcounter{problem}
\newcounter{proposition}
\newcounter{corollary}
\newcounter{construction}
\newcounter{example}
\newcounter{xca}
\newcounter{comments}
\newcounter{remark}
\newcounter{assumption}
\newtheorem{theorem}[theorem]{Theorem}
\newtheorem{definition}[definition]{Definition}
\newtheorem{remark}[remark]{Remark}
\newtheorem{assumption}[assumption]{Assumption}
\newtcolorbox{resp}[1][]{%
	enhanced jigsaw,%
	colback=gray!5!white,%
	colframe=gray!80!black,%
	size=small,%
	boxrule=1pt,%
	halign title=flush center,%
	coltitle=black,%
	breakable,%
	drop shadow=black!50!white,%
	attach boxed title to top left={xshift=1cm,yshift=-\tcboxedtitleheight/2,yshifttext=-\tcboxedtitleheight/2},%
	minipage boxed title=3cm,%
	boxed title style={%
		colback=white,%
		size=fbox,%
		boxrule=1pt,%
		boxsep=2pt,%
		underlay={%
			\coordinate (dotA) at ($(interior.west) + (-0.5pt,0)$);
			\coordinate (dotB) at ($(interior.east) + (0.5pt,0)$);
			\begin{scope}[gray!80!black]
				\fill (dotA) circle (2pt);
				\fill (dotB) circle (2pt);
			\end{scope}
		}%
	},%
	#1%
}
\newcommand{\R}{{\mathbb{R}}}
\newcommand{\greensquare}{\tikz\fill[green!70!white] (0,0) rectangle (2mm,2mm);}
\newcommand{\reddsquare}{\tikz\fill[red!30!white] (0,0) rectangle (2mm,2mm);}
\newcommand{\EE}{\mathds{E}}
\newcommand{\PP}{\mathds{P}}
\title{Safety Controller Synthesis for Stochastic Networked Systems under Communication Constraints}
\author{Omid Akbarzadeh$^1$, Mohammad H. Mamduhi$^2$, and Abolfazl Lavaei$^1$
\thanks{$^1$O. Akbarzadeh and A. Lavaei are with the School of Computing, Newcastle University, United Kingdom. Emails: \texttt{\{omid.akbarzadeh,abolfazl.lavaei\}@newcastle.ac.uk}.}
\thanks{$^2$M. H. Mamduhi is with the School of Computer Science, University of Birmingham, United Kingdom. Email:   \texttt{m.h.mamduhi@bham.ac.uk}.}}
\begin{document}
\maketitle

\begin{abstract}
	This paper develops a framework  for synthesizing safety controllers for discrete-time stochastic linear control systems (dt-SLS) operating under communication imperfections. The control unit is remote and communicates with the sensor and actuator through an imperfect wireless network. We consider a constant \emph{delay} in the sensor-to-controller channel (uplink), and \emph{data loss} in both sensor-to-controller and controller-to-actuator (downlink) channels. In our proposed scheme, data loss in each channel is modeled as an independent Bernoulli-distributed random process. To systematically handle the uplink delay, we first introduce an \emph{augmented} discrete-time stochastic linear system (dt-ASLS) by concatenating all states and control inputs that sufficiently represent the state-input evolution of the original dt-SLS under the delay and packet loss constraints. We then leverage control barrier certificates for dt-ASLS to synthesize a controller that ensures the stochastic safety of dt-SLS, guaranteeing that all trajectories remain outside unsafe regions with a quantified probabilistic bound. Our approach translates safety constraints into matrix inequalities, leading to an optimization problem that eventually quantifies the probability of satisfying the safety specification in the presence of communication imperfections. We validate our results on an RLC circuit subject to both constant delay and probabilistic data loss.
\end{abstract}

\section{Introduction}
\textbf{Motivation.}
Safety analysis in stochastic networked control systems has become a crucial design aspect due to the increasing application of such systems in safety-critical domains such as autonomous vehicles, networked robotics, and industrial automation. In many such systems, the controller is located away from the plant (cf. Fig.~\ref{fig: system}) for both practical and safety reasons. Representative examples include process industries, where controllers are housed in dedicated control rooms outside hazardous zones, and offshore or remote assets (\emph{e.g.}, wind turbines, pipelines) controlled from onshore facilities. In these settings, closed-loop operation depends on wireless communication channels, which are inherently subject to imperfections such as delays and data loss. Even a fixed delay in the feedback loop can degrade performance or destabilize a system, while data loss in sensor or actuator signals can further disrupt the closed-loop control performance~\cite{LQGdealypacketloss}. Hence, designing controllers that formally ensure a networked system remains within safe operational bounds under communication issues, such as delays and stochastic uncertainties, is challenging.

To address the challenges of safe controller synthesis for stochastic control systems, the existing literature has primarily focused on the promising approach of \emph{control barrier certificates (CBCs)}, initially introduced in~\cite{Pranja}. Similar to Lyapunov functions, CBCs are designed to satisfy specific conditions on both the function and its evolution along the system trajectories. By defining an initial level set of CBCs from a given set of initial states, these CBCs effectively separate unsafe regions from all admissible system evolutions, thereby providing (probabilistic) safety guarantees. This technique is widely applied for the formal verification and synthesis of both deterministic \cite{borrmann2015control,ames2019control} and stochastic dynamical systems \cite{lavaei2024scalable,nejati2024context,zaker2024compositional}.

\begin{figure}
	\centering
	\includegraphics[width=0.8\linewidth]{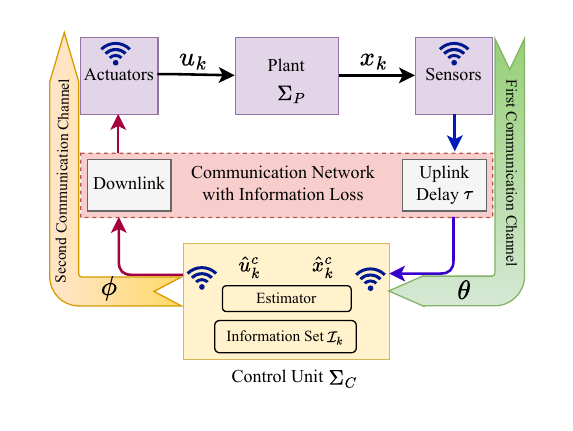}
	\caption{Schematic of the networked control system architecture, where the controller interacts with the plant through communication channels subject to uplink delay and data loss. Data loss in the uplink and downlink channels is modeled by independent Bernoulli variables $\theta$ and $\phi$, respectively.}
	\label{fig:system_architecture}
	\label{fig: system}
	\vspace{-3mm}
\end{figure}

While CBCs have been effectively used to enforce safety constraints in stochastic control systems, they typically assume a continuously closed feedback loop. However, in the \emph{simultaneous} presence of delays and data loss, the loop may be intermittently broken: control signals might fail to reach the actuators on time, and control actions may be stalled due to missing sensor measurements. Although several studies have addressed the safety analysis of control systems with time delay~\cite{Amesdelay2023,ahmadi2019safe,REN2022}, they typically assume a \emph{reliable} feedback loop, while ignoring data loss in communication channels. As a result, these methods are inadequate for scenarios where \emph{both} delay and data loss jointly impact control synthesis and closed-loop behavior.

The challenges of control systems operating over wireless communication channels, where data loss and delays are prevalent, have been investigated in the literature (see \emph{e.g.},~\cite{4118454,4118476, ip-cta_20050178}).  Approaches like the Glossy architecture\cite{5779066} improve reliability via synchronous flooding, while studies on Kalman filtering with intermittent observations~\cite{6004816} show that even minor communication failures can degrade control accuracy. Optimal control strategies have been developed to address these issues, including linear quadratic Gaussian control under delay-dependent constraints~\cite{8405590} and models incorporating traffic-correlated delays and losses~\cite{9462479}. Broader surveys are available in~\cite{4118465}, and the work~\cite{maggio_et_al} has also examined stability under repeated deadline misses, emphasizing the need for timeliness in ensuring robustness. The stability analysis of control systems over wireless networks has also been addressed in~\cite{Trimpe-stability1,Trimpe-stability2}, where the augmentation approach used in our framework is inspired by~\cite{Trimpe-stability1}. Nevertheless, none of the above studies have considered the \emph{coupled effects} of delay and data loss in the context of \emph{safety analysis} for control systems over wireless networks, as we addressed within this work.

\textbf{Key Contribution.}
Inspired  by the above challenges, we develop a formal framework for synthesizing safe controllers for dt-SLS under communication constraints. Specifically, we consider \emph{delays} in the sensor-to-controller (uplink) channel and probabilistic \emph{data loss} in both the uplink and controller-to-actuator (downlink) channels (see Fig.~\ref{fig: system}). Our approach handles communication imperfections by introducing an augmented system representation (dt-ASLS) that captures the effects of uplink delays and stochastic data loss. We leverage CBCs within this augmented framework to provide probabilistic safety guarantees by formulating safety constraints as matrix inequalities. Unlike our prior work~\cite{Akbarzadeh}, which assumes no communication delay, the proposed framework accounts for the coupled effects of uplink delays and data loss, making the problem significantly more challenging.

\section{Control System with Communication Model}\label{sec:model}
\subsection{Notation and Preliminaries}
We denote the sets of real, non-negative and positive real numbers, respectively, by $\mathbb{R}$, $\mathbb{R}_{ 0}^+$, and $\mathbb{R}^{+}$. Sets of non-negative and positive integers are denoted by $\mathbb{N}$ and~$\mathbb{N}^{+}$, respectively. Given $N$ vectors $x_i \in \mathbb{R}^{n_i}$, $x\!=\![x_1;\dots;x_N]$ is its concatenated \emph{column} vector. Trace of a matrix $A \in \mathbb{R}^{N \times N}$ with diagonal elements $a_1, \dots , a_N $ is written as $\text{Tr}(A)\!=\! \sum_{i=1}^N a_i$. An identity matrix of dimension~$n$, and a zero matrix of dimension $n\times m$ are denoted, respectively, by  $\mathbf{I}_n$ and $\mathbf{0}_{n\times m}$. A normal distribution with the mean value $\mu$ and the variance $\sigma$ is denoted by $\mathcal{N}(\mu,\sigma)$.
For a \emph{symmetric} matrix $\mathcal{P}$, $\mathcal{P} \succ 0$ ($\mathcal{P} \succeq 0$) denotes $\mathcal{P}$ is positive definite (positive semi-definite). The transpose of a matrix $A$ is written as $A^\top$. The Euclidean norm of a vector $x \in \mathbb{R}^n$ is expressed as $\Vert x \Vert$. The empty set is denoted by $\emptyset$. A diagonal matrix in $\R^{n\times{n}}$ with  diagonal entries $(a_1,\ldots,a_n)$ is signified by $\mathsf{diag}(a_1,\ldots,a_n)$. The variance of a random variable \(a\) is denoted by \(\textsf{Var}[a]\). Given a system $\Sigma$ and a property $\Upsilon$, $\Sigma \models \Upsilon$ denotes $\Sigma$ satisfies $\Upsilon$.

By $(\Omega, \mathcal{F}_{\Omega}, \PP_{\Omega})$, we denote a probability space, where $\Omega$ is the sample space, $\mathcal{F}_{\Omega}$ is a sigma-algebra on $\Omega$, and $\PP_{\Omega}$ is the probability measure. We assume random variables $\mathcal X$ are measurable functions, \emph{i.e.,} $\mathcal X:(\Omega,\mathcal F_{\Omega})\rightarrow (S_{\mathcal X},\mathcal F_{\mathcal X})$, such that $\mathcal X$ induces a probability measure on $(S_{\mathcal X},\mathcal F_{\mathcal X})$ as $Prob\{\mathcal A\} = \PP_{\Omega}\{\mathcal X^{-1}(\mathcal A)\}$, $\forall \mathcal A\in \mathcal F_{\mathcal X}$.
A topological space~$\mathbf{S}$ is said to be Borel, denoted by $\mathcal B(\mathbf{S})$, if it is homeomorphic to a Borel subset of a Polish space, \textit{i.e.}, a separable and metrizable space.

\subsection{Networked Control System}
We consider a discrete-time stochastic linear control system (dt-SLS) with the following dynamics
\begin{equation}\label{Plant}
	x_{k+1} = A x_{k} + B u_{k} + w_{k},\;\; k\in \mathbb N,
\end{equation}
where \( x_k \in X \) and \( u_k \in U \), with \( X \subseteq \mathbb{R}^n \) and  \( U \subseteq \mathbb{R}^m\)  being Borel spaces representing the state and input sets, respectively. 
The matrices \( A \in \mathbb{R}^{n \times n} \) and \( B \in \mathbb{R}^{n \times m} \) 
are the state and input matrices of the system. 
Additionally, \( w_k \) denotes a sequence of independent and identically distributed (i.i.d.)
random variables from a normal
distribution~$\mathcal{N}(0,\sigma_w)$, representing the system's \emph{process} noise.

The collocated control and estimation unit of the system is assumed to be remote, communicating with the sensing and actuator stations via wireless communication channels. We assume the communication channel introduces both \emph{delay} and \emph{data loss} during data transmission. To account for communication issues, 
we introduce a constant delay \(\tau \in \mathbb{N}\) in the uplink channel, \emph{i.e.}, between the sensor and the controller. Additionally, to model data loss in the uplink and downlink channels, we consider two \emph{mutually independent} random processes \(\theta_k\) and \(\phi_k\), $k\in \mathbb{N}$, where the corresponding binary random variables are selected from Bernoulli distributions.
Specifically, $\theta_k = 1$ indicates that the state information sent to the controller at time $k$ is successfully received, while $\theta_k = 0$ denotes a data loss at that time. Note that due to the uplink delay, if a data packet is not dropped, it is received by the controller after $\tau$ time steps. Similarly, $\phi_k = 1$ indicates that the actuator successfully receives the control input at time $k$, while $\phi_k = 0$ means the control input is not received at that time. Furthermore, \(p_{\theta}\) and \(q_{\phi}\) denote, respectively, the probabilities of \emph{successful} data transmissions in uplink and downlink channels, \emph{i.e.}, \(\PP[\theta_k=1]=p_{\theta}\), and \(\PP[\phi_k=1]=q_{\phi}\), for any $k \in \mathbb N$. 

We now define the networked system model $\Sigma$, based on the dt-SLS dynamics in~\eqref{Plant}, incorporating communication delay and data loss.

\begin{definition}[\textbf{dt-SLS-$\Sigma$}]\label{dt-SLS}
	The networked system, comprising the dt-SLS plant in~\eqref{Plant} and a remote control unit communicating over wireless channels with delay and data loss, can be formally defined by the following tuple
	\begin{equation}\label{dt-SLSCC-tuple}
		\Sigma=(X,U,A,B,w, \tau, \theta, \phi).
	\end{equation}    
	We represent $\Sigma$ using two state-space models: one for the plant, denoted by $\Sigma_P$, and one for the control unit, represented by $\Sigma_C$. The dt-SLS-$\Sigma$ then evolves according to the following dynamics:
	\begin{align}\label{dt-SLSCC}
		&\Sigma_P:
		\begin{cases} 
			x_{k+1}=Ax_k+Bu_k+w_k,\\
			u_k=\phi_k \hat{u}_k^c + (1-\phi_k)u_{k-1},
		\end{cases}\\
		&\Sigma_C: 
		\begin{cases}\label{dt-SLSCC1}
			\hat{x}^c_{k+1}=A\hat{x}^c_k+B\hat{u}_k^c,\\
			\hat{u}_k^c = \mathcal{F}\hat{x}^c_k,
		\end{cases}
	\end{align}
	where $\mathcal{F} \in \mathbb{R}^{n \times n}$ is a control matrix, \(\hat{x}_{k}^c \in X\) (the estimate of plant state in~\eqref{control-s-f}) and \(\hat{u}_{k}^c \in U\) are the state and input of the controller at time $k$, and $u_k$ is the control signal applied to the \emph{actuator} at time $k$. In fact, $\hat{u}_k^c$ is the control signal computed at the control unit at time $k$, and according to \eqref{dt-SLSCC}, if $\phi_k=1$, then $u_k=\hat{u}_k^c$. The initial state, denoted by $x_0 \sim \mathcal{G}$ with $x_0 \in X$, is drawn from a known distribution $\mathcal{G}$ with finite variance.
	We denote by $\mathsf{x}_k$ the solution process of dt-SLS-$\Sigma$ at time $k \in \mathbb{N}$, under input and process noise trajectories $u(\cdot)$ and $w(\cdot)$, starting from an initial state $x_0\in X$.
\end{definition}

\begin{remark} The dynamics in \eqref{dt-SLSCC} and \eqref{dt-SLSCC1} capture the interaction between the plant and remote controller over unreliable wireless channels. We represent \(\Sigma\) into \(\Sigma_P\) and \(\Sigma_C\) to separate the plant dynamics from the controller logic, which is affected by delays and data loss. Due to the uplink delay, the control unit always receives delayed state information. Specifically, at time $k$, the most recent state available to the controller is $x_{k-\tau}$, provided the uplink transmission at time $k-\tau$ was successful (\emph{i.e.}, $\theta_{k-\tau} = 1$). The control input $u_k$ applied to the plant is $\hat{u}_k^c$ if $\phi_k = 1$, or the previous input $u_{k-1}$ if $\phi_k = 0$, \emph{i.e.,} a zero-order hold mechanism is used in the event of downlink data loss (cf.~\eqref{dt-SLSCC}). The controller evolves based on an internal state \(\hat{x}_k^c\), representing the estimate of plant state at time $k$ based on the available information and applies a state-feedback control input \(\mathcal{F}\hat{x}_k^c\) (cf.~\eqref{dt-SLSCC1} and~\eqref{control-s-f}).
\end{remark}

The state \( x_k \) of the plant \( \Sigma_P \)~in~\eqref{dt-SLSCC} can be rewritten with \(\tau\)-step backward recursion~\cite{9462479}, as follows:
\begin{align}\label{recursion}
	x_k &= A^\tau x_{k-\tau} + \sum_{t=0}^{\tau-1} A^t B u_{k-t-1} + \sum_{t=0}^{\tau-1} A^t w_{k-t-1}.
\end{align}
Substituting the control input from~\eqref{dt-SLSCC}, given by $u_k = \phi_k \hat{u}_k^c + (1 - \phi_k) u_{k-1}$, into~\eqref{recursion} yields
\begin{align}\notag
	x_k&=A^{\tau} x_{k-\tau}+\sum_{t=0}^{\tau-1}\big(\phi_{k-t-1}A^tB\hat{u}_{k-t-1}^c\\\label{P-State-delay} &~~~+(1-\phi_{k-t-1})A^tB u_{k-t-2} +A^tw_{k-t-1}\big).
\end{align}
Note that the controller does not have access to the variables \(\phi_{k}\)~in~\eqref{P-State-delay} since these occur on the actuator side and no feedback on \(\phi_k\) is provided to the controller. Consequently, when constructing the networked control system dt-SLS-$\Sigma$~in~Definition~\ref{dt-SLS}, the controller always assumes that the control input applied to the plant \(\Sigma_P\) is \(\hat{u}_k^c\) (\emph{i.e.,}~$\phi_k=1$).
Then, we define the controller's state at time \( k \) as $\hat{x}^c_k = \EE\big[{x}_{k}\,\big|\,\phi_k=1,\forall\, k\,,\mathcal{I}_k\big]$, where $\mathcal{I}_k$ is the set of available information at the control side at time $k$, defined as
\begin{align*}
	\mathcal{I}_k
	\;=\;
	\Bigl\{&A, B, x_0,\,\{\theta_0,\dots,\theta_{k-\tau}\},\\&\{\theta_0x_0,\dots,\theta_{k-\tau}x_{k-\tau}\},\,
	\{\hat{u}^c_{0},\cdots,\hat{u}^c_{k-1}\}
	\Bigr\}.
\end{align*}
By leveraging \eqref{P-State-delay}, we then have
\begin{align}\nonumber
	\hat {x}^c_k &= \EE\big[{x}_{k}\,\big|\,\phi_k=1,\forall\, k\,,\mathcal{I}_k\big]\\\label{State-Controller-Delayed}
	&=\EE\Big[\theta_{k-\tau}\Big(A^{\tau}x_{k-\tau} +\sum_{t=0}^{\tau-1}A^tB\hat{u}_{k-t-1}^c\\\nonumber&+\sum_{t=0}^{\tau-1}A^tw_{k-t-1}\Big)\!+\underbrace{(1-\theta_{k-\tau})(A\hat{x}^c_{k-1}\!+\!B\hat{u}_{k-1}^c)}_{(\star)}\,\big|\,\mathcal{I}_k\Big]\!.
\end{align}
We have formulated $(\star)$ based on the information set $\mathcal{I}_k$ available to the control unit at time $k$. If $\theta_{k-\tau} = 0$, the control unit does not receive the plant state $x_{k-\tau}$, which is the most recent state the controller could access at time $k$ due to the $\tau$-step delay. In this case, the control unit uses the model in~\eqref{dt-SLSCC1} to compute a model-based estimate of the state $\hat{x}_k^c$, which serves as the state of the controller. Since \(w_k\) is a sequence of i.i.d. random variables drawn from \(\mathcal{N}(0,\sigma_w)\), and \(\EE[\theta_{k-\tau}] = p_\theta\), we can rewrite \eqref{State-Controller-Delayed} as
\begin{align}\notag
	\hat {x}^c_k &=\EE\big[{x}_{k}\,\big|\,\phi_k=1,\forall\, k\,,\mathcal{I}_k\big]\\\notag&=p_\theta\Big(A^{\tau}x_{k-\tau}+\sum_{t=0}^{\tau-1}A^tB\hat{u}_{k-t-1}^c\Big)\\\label{control-s-f}&~~~+\!(1-p_\theta)(A
	\hat{x}^c_{k-1}+B\hat{u}_{k-1}^c).
\end{align}
Next, we provide the formal definition of safety for the described dt-SLS-$\Sigma$~in~Definition~\ref{dt-SLS}.

\begin{definition}[\textbf{Safety under delay and data loss}]\label{Safety}
	Given a dt-SLS-$\Sigma=(X,U,A,B,w, \tau, \theta, \phi)$~ as in~Definition~\ref{dt-SLS} subjected to delay and data loss, consider a safety specification $\Upsilon = (X_0, X_1, \mathcal T)$, where $X_0,X_1\subseteq X$ are its initial and unsafe sets, respectively. We assume $X_0\cap X_1 = \emptyset$, otherwise system is unsafe with a probability of 1. The dt-SLS-$\Sigma$ is said to be safe within time horizon $\mathcal{T} \in \mathbb{N}$, denoted by $\Sigma\models\Upsilon$, if all trajectories of $\Sigma$ starting from $X_{0}$ do not reach $X_1$ during  $\mathcal{T}$\!. Since trajectories
	of dt-SLS-$\Sigma$ are probabilistic, we are interested in
	computing $\PP \{\Sigma\models\Upsilon\}\ge 1-\xi_{\tau\theta\phi}$,~(cf.~Remark~\ref{Delay-contribution}) where $\xi_{\tau\theta\phi} \in(0,1]$.
\end{definition}
\begin{remark}\label{Delay-contribution}
	Although the effects of uplink delay~$\tau$ and probabilistic data loss in both uplink and downlink channels are not explicit in the safety specification~$\Upsilon$, they influence the lower bound of the safety probability $1-\xi_{\tau\theta\phi}$ (cf.~\eqref{matrix-inequlity}).
\end{remark}

\subsection{Augmented dt-SLS-$\Sigma$}
To formally analyze safety of the networked control system dt-SLS-$\Sigma$, we first introduce an augmented networked system representation that captures both the inputs and states of the dt-SLS-$\Sigma$ and serves as the foundation for defining the CBC, which follows next. The dynamics of $x_k$ and $\hat{x}_k^c$ are given in~\eqref{dt-SLSCC} and~\eqref{dt-SLSCC1}, while dynamics of $u_k$ and $\hat{u}_k^c$ can be defined as
\begin{align*}
	u_{k+1}&= \phi_{k+1} \mathcal{F}(A\hat{x}^c_k+B\hat{u}_k^c) + (1-\phi_{k+1})u_{k},\\
	\hat{u}_{k+1}^c &= \mathcal{F}\hat{x}^c_{k+1}=\mathcal{F}(A\hat{x}^c_k+B\hat{u}_k^c).
\end{align*}
Let us define \(\mathbf{{x}}_k = [{x}_k; {x}_{k-1}; \dots; {x}_{k-\tau}]\), and \(\mathbf{\hat{x}}_k^c = [\hat{x}_k^c; \hat{x}_{k-1}^c]\), where \(\mathbf{x}_k \in  \mathbb{R}^{\psi}\), and $\mathbf{\hat x}_k^c \in \mathbb{R}^{2n}$ with $\psi= n(\tau+1)$. Moreover, \({\mathbf{u}}_k = [u_{k};u_{k-1}; \dots; u_{k-{\tau}};u_{k-{\tau}-1}]\), and \(\hat{\mathbf{u}}_k^c = [\hat{u}_{k}^c;\hat{u}_{k-1}^c; \dots; \hat{u}_{k-{\tau}}^c]\) where \({\mathbf{u}}_k\in \mathbb{R}^{\varpi+m},\hat{\mathbf{u}}_k^c \in \mathbb{R}^{\varpi}\) and $\varpi= m(\tau+1)$. Additionally, we define a noise vector \(\mathbf{w}_k = [w_k;w_{k-1}; \dots; w_{k-{\tau}}]\), where \(\mathbf{w}_k \in \mathbb{R}^{\psi}\).
We then augment $\textbf{x}_k$, $\hat{\textbf{x}}_k^c$, $\mathbf{u}_k$, and $\hat{\textbf{u}}_k^c$ of our dt-SLS-$\Sigma$ as \(\mathcal{Z}_k = [\mathbf{x}_k;\mathbf{\hat x}_k^c;\mathbf{u}_k;\hat{\mathbf{u}}_k^c]\), where \(\mathcal{Z}_k \in \mathbb{R}^\kappa\) with \(\kappa = 2\varpi+\psi+2n+m\). The \emph{augmented dt-SLS}, denoted as dt-ASLS-$\Lambda$, evolves as
\begin{equation}\label{eq:augmented}
	\Lambda: \mathcal{Z}_{k+1} = \mathcal{A} \mathcal{Z}_k + \mathcal{D} \mathbf{w}_k, 
\end{equation}
where \(\mathcal{A} \in \mathbb{R}^{ \kappa \times \kappa}\) and \(\mathcal{D}\in \mathbb{R}^{\kappa \times \psi}\), with its equivalent representation
\begin{align}\label{aug-system}
	\!\!\Lambda\!&:\!\! 
	\begin{bmatrix} { \mathbf{x}}_{k+1}\\  \hat{ \mathbf{x}}_{k+1}^c\\ { \mathbf{u}}_{k+1}\\ \hat{ \mathbf{u}}_{k+1}^c \end{bmatrix} \!\!=\!\!
	\underbrace{\begin{bmatrix} \mathcal{A}_{11} & \mathcal{A}_{12} & \mathcal{A}_{13} & \mathcal{A}_{14} \\ \mathcal{A}_{21} & \mathcal{A}_{22} &\mathcal{A}_{23}&\mathcal{A}_{24}\\
			\mathcal{A}_{31} & \mathcal{A}_{32} & \mathcal{A}_{33}& \mathcal{A}_{34}\\
			\mathcal{A}_{41} & \mathcal{A}_{42} & \mathcal{A}_{43} & \mathcal{A}_{44}
	\end{bmatrix}}_{\mathcal{A}}\!
	\begin{bmatrix} { \mathbf{x}}_{k}\\  \hat{ \mathbf{x}}_{k}^c\\ { \mathbf{u}}_{k}\\ \hat{ \mathbf{u}}_{k}^c \end{bmatrix} \!\!+\!
	\mathcal{D} \mathbf{w}_k,\\\notag
	&\text{with}~
	\mathcal{D}  =\! \begin{bmatrix}  \mathbf{I}_{n}&{A} & \dots & {A}^\tau\\
		\boldsymbol{0}_{(\kappa -n) \times n} &\boldsymbol{0}_{(\kappa -n) \times n} & \dots & \boldsymbol{0}_{(\kappa -n) \times n}\end{bmatrix}_{\kappa \times \psi}\!\!\!\!\!\!\!\!\!\!\!\!,
\end{align}
where $\mathcal A_{ij},~ i,j\in \{1,\dots,4\}$ are defined as in~\eqref{A-elements} in the Appendix. Since $\phi_{k - t - 1}$ is an i.i.d.\ Bernoulli random variable for all $t \in \{0, 1, \dots, \tau - 1\}$, we simplify the notation by denoting $\phi_{k - t - 1}$ as $\phi$ and $1 - \phi_{k - t - 1}$ as $\bar{\phi}$. We used the \emph{padding} technique~\cite{maggio_et_al} in constructing matrix \(\mathcal{A} \) to explicitly represent the state variable transitions from one time instant to the next. The padding incorporates trivial equations \( x_i = x_i,\,\hat{u}^c_i = \hat{u}^c_i \), with \(\forall i \in [k - \tau + 1, k] \), \( \hat{x}_k^c = \hat {x}_k^c \), and \({u}_i = {u}_i\), with \(\forall i \in [k - \tau, k] \) ensuring continuity in the state representation of the augmented system dt-ASLS-$\Lambda$. 

We now decompose matrix $\mathcal{A}$ in~\eqref{aug-system} into a sum of multiple components. Specifically, $\mathcal{A}$ is expressed as a combination of a baseline matrix $\mathcal{A}_1$ and additional terms that depend on the Bernoulli-distributed variable $\phi$. This decomposition allows us to analyze the dynamics of the augmented system~dt-SLS-$\Lambda$ by isolating the effect of $\phi$. Furthermore, we use the transformation $\phi = q_\phi (1 - \zeta_\phi)$, where the new variable \(\zeta_\phi\) has a two-point distribution and takes one of two possible values: either \(1\) or \(1 - \frac{1}{q_\phi}\), with $\EE[\zeta_\phi]=0$ and $\mathsf{Var}[\zeta_\phi]=\EE[\zeta_\phi^2]=\frac{1-q_\phi}{q_\phi}$.
This transformation simplifies the computation by allowing the decomposition of $\mathcal{A}$ as 
\begin{equation}\label{decopmposition}
	\mathcal{A} = \mathcal{A}_{1} +  \mathcal{A}_{2} + \zeta_\phi\mathcal{A}_{3},
\end{equation}
where
	\begin{align}\notag
		\mathcal{A}_1 &\!\!=\!\! \begin{bmatrix} \mathcal{A}_{11} \!\!&\!\! \mathcal{A}_{12} \!\!&\!\! \mathcal{A}_{13}^a \!\!&\!\! \mathbf{0}_{\psi \times \varpi} \\  {\mathcal{A}}_{21} \!\!&\!\! {\mathcal{A}}_{22} \!\!&\!\!\mathcal{A}_{23} \!\!&\!\!\mathcal{A}_{24}\\
			\!\mathbf{0}_{(\varpi+m) \times \psi} \!\!&\!\! \mathbf{0}_{(\varpi +m) \times 2n} \!\!&\!\!\mathcal{A}^a_{33} \!\!&\!\!\mathbf{0}_{(\varpi+m) \times \varpi}\!\!\!\\
			\mathbf{0}_{\varpi \times \psi} \!\!&\!\! \mathbf{0}_{\varpi \times 2n} \!\!&\!\!\mathbf{0}_{\varpi \times (\varpi+m)} \!\!&\!\!\mathbf{0}_{\varpi \times \varpi}
		\end{bmatrix}\!\!,\\\notag
	\end{align}
	\begin{align}\notag
		\mathcal{A}_2 &\!\!=\!\! \begin{bmatrix} \mathbf{0}_{\psi \times \psi} \!&\! \mathbf{0}_{\psi \times 2n} \!&\! (1-q_\phi) \mathcal{A}_{13}^b \!&\! q_\phi\mathcal{A}_{14}^a \\  \mathbf{0}_{2n \times \psi} \!&\! \mathbf{0}_{2n \times 2n} \!&\! \mathbf{0}_{2n \times (\varpi+m)} \!&\! \mathbf{0}_{2n \times \varpi}\\
			q_\phi\mathcal{A}_{31}^a\!&\! q_\phi\mathcal{A}_{32}^a \!&\! (1-q_\phi)\mathcal{A}_{33}^b \!&\! q_\phi\mathcal{A}_{34}^a \\
			{\mathcal{A}}_{41} \!&\! {\mathcal{A}}_{42} \!&\!\mathcal{A}_{43} \!&\!\mathcal{A}_{44}
		\end{bmatrix}\!\!,\\\label{Newjh}
		\mathcal{A}_3 &\!\!=\!\! \begin{bmatrix} \mathbf{0}_{\psi \times \psi} \!&\! \mathbf{0}_{\psi \times 2n} \!&\! q_\phi \mathcal{A}_{13}^b \!&\! -q_\phi\mathcal{A}_{14}^a \\  \mathbf{0}_{2n \times \psi} \!&\! \mathbf{0}_{2n \times 2n} \!&\! \mathbf{0}_{2n \times (\varpi+m)} \!&\! \mathbf{0}_{2n \times \varpi}\\
			-q_\phi\mathcal{A}_{31}^a\!&\! -q_\phi\mathcal{A}_{32}^a \!&\! q_\phi\mathcal{A}_{33}^b \!&\! -q_\phi\mathcal{A}_{34}^a\\
			\mathbf{0}_{\varpi \times \psi} \!&\! \mathbf{0}_{\varpi \times 2n} \!&\!\mathbf{0}_{\varpi \times (\varpi+m)} \!&\!\mathbf{0}_{\varpi \times \varpi}
		\end{bmatrix}\!\!,
	\end{align}
and the new entries $\mathcal A^a_{ij}$ and $\mathcal A^b_{ij}$ can be found in~\eqref{New-entries} in the Appendix. In particular, the new transformation $\phi = q_\phi (1 - \zeta_\phi)$ relaxes the condition in~\eqref{matrix-inequlity} by eliminating cross terms arising from the multiplication of $\mathcal{A}_3$ with $\mathcal{A}_2$ and $\mathcal{A}_1$, given that $\EE[\zeta_\phi] = 0$ (cf. proof of Theorem~\ref{main-theorem}). It also facilitates the treatment of bilinearity, as will be discussed in Subsection~\ref{Bilinearity}, enabling us to derive the linear conditions presented in~\eqref{Linear-conditins}.
With the augmented networked control system formulated, we now introduce the CBC for the dt-ASLS-$\Lambda$ in~\eqref{aug-system} in the next section.

\section{Control Barrier Certificate}\label{sec:CBC}
\begin{definition}[CBC]\label{barrier}
	Consider a dt-SLS-$\Sigma=(X,U,A,B,w, \tau, \theta, \phi)$ in~Definition~\ref{dt-SLS}, with the safety specification $\Upsilon = (X_0, X_1, \mathcal T)$, and \(\hat{x}_{k}^c\), \(\hat{u}_{k}^c\) as the state and input of the controller in~\eqref{dt-SLSCC}. A function $\mathbb{B}\!: X \!\times\! X \!\times\! U \!\times\! U \to \mathbb{R}^{+}_{0}$ is said to be a control barrier certificate (CBC) for the augmented dt-ASLS-$\Lambda$ if there exist $c,\beta,\eta \in \mathbb{R}^{+}$ with $ \beta > \eta  $, such that
	\begin{subequations}
		\begin{align}\label{con}
			&\forall \mathcal{Z} \!\in\! X_{0} \!\times\! X_0\!\times\! U\!\times\! U\!:\quad \mathbb{B}(\mathcal{Z}) \leq \eta,\\\label{con1}
			&\forall \mathcal{Z} \!\in\! X_{1}\!\times\! X_1\!\times\! U\!\times\! U\!:\quad \mathbb{B}(\mathcal{Z}) \geq \beta,\\\label{con3}
			&\forall \mathcal{Z} \!\in\! X\!\times\! X \!\times\! U\!\times\! U\!:\quad \EE\Big[ \mathbb{B}(\mathcal{Z}_{k+1}) \big| \mathcal{Z}_k \Big] \!\leq \mathbb{B}(\mathcal{Z}_k)+c.
		\end{align}
	\end{subequations}	
\end{definition}
\begin{remark}[\textbf{On the effect of $\tau$,~$\theta$,~$\phi$}]
	The CBC condition~\eqref{con3} is influenced by the uplink delay~$\tau$ and probabilistic data loss in both uplink ($\theta$) and downlink ($\phi$) channels, as these factors explicitly appear in the dt-ASLS-$\Lambda$ described in~\eqref{eq:augmented}. Consequently, these communication issues directly affect the CBC $\mathbb{B}$ and inevitably influence the values of the level sets $\eta$ \eqref{con}, $\beta$ \eqref{con1}, and the constant $c$. These parameters collectively quantify the lower bound on the safety probability~$1-\xi_{\tau\theta\phi}$, as outlined in Theorem~\ref{Kushner}.
\end{remark}

\begin{remark}[\textbf{On downlink delays}]\label{downlink-delay}
Our framework can be readily extended to handle downlink delays using a similar approach as for the uplink. However, for clarity of presentations, we leave this extension to future works.
\end{remark}

\subsection{Computation of CBC and Safety Controller}
We consider the structure of our CBC to be quadratic as $\mathbb{B}(\mathcal{Z}) = \mathcal{Z}^{\top}\mathcal{P}\mathcal{Z}$, where $\mathcal{P}\in\mathbb{R}^{\kappa \times\kappa} $, with \(\kappa = 2\varpi+\psi+2n+m\), and $\mathcal{P} \succ 0$. We then propose a framework for designing a CBC and its corresponding safety controller to enforce safety requirements for the dt-SLS-$\Sigma$. 
To achieve this, we first raise the following assumption.
\begin{assumption} \label{assum}
	Assume for some given $p_\theta$, $q_\phi \in (0,1)$, there exist matrices $\mathcal{P} \succ 0$ and $\mathcal{F}$ that fulfill the matrix inequality
	\begin{align}\notag
		\mathcal{A}_{1}^\top \mathcal{P}\mathcal{A}_{1} +\mathcal{A}_{1}^\top \mathcal{P}\mathcal{A}_{2}&+  \mathcal{A}_{2}^\top \mathcal{P}\mathcal{A}_{2} + \mathcal{A}_{2}^\top \mathcal{P}\mathcal{A}_{1}\\\label{matrix-inequlity}&+(\frac{1-q_\phi}{q_\phi})\mathcal{A}_{3}^\top  \mathcal{P}\mathcal{A}_{3} - \mathcal{P} \preceq 0.
	\end{align}
\end{assumption}
\vspace{0.2cm}
\begin{remark}
	The maximum delay and data loss that a control system can tolerate depend majorly on the eigenvalues of the state matrix $A$, which is reflected in the matrix inequality~\eqref{matrix-inequlity}. Generally, state matrices with larger eigenvalues severely limit the tolerable delay and data loss, whereas state matrices with eigenvalues placed well inside the unit circle allow for larger delays and higher data loss while fulfilling Assumption~\ref{assum}, \cite{HU20071243,Gaudette}.
\end{remark}

\subsection{Handling Bilinearity}\label{Bilinearity} 
	Before presenting the main results of this work under Assumption~\ref{assum}, we note that there is a bilinearity between $\mathcal{P} $ and $\mathcal{F} $ in the matrix inequality~\eqref{matrix-inequlity}. 
	To resolve it, we define weight coefficients $\delta_i \in(0,1), i \in \{1,2,3,4,5\},$ such that $\sum_{i=1}^5\delta_i=1$, and
	\begin{subequations}\label{matrix-inequality1}
		\begin{align}
			\mathcal{A}_1^\top \mathcal{P} \mathcal{A}_1 - \delta_1\mathcal{P} \preceq 0, & \quad \label{matrix-inequality1a} \\ 
			\mathcal{A}_1^\top \mathcal{P} \mathcal{A}_2 - \delta_2\mathcal{P} \preceq 0, & \quad \label{matrix-inequality1b} \\ 
			\mathcal{ A}^\top_{2} \mathcal{P}  \mathcal{ A}_{2} -\delta_3 \mathcal{P} \preceq 0, & \quad \label{matrix-inequality1c} \\ 
			\mathcal{ A}^\top_{2} \mathcal{P}  \mathcal{A}_{1} -\delta_4 \mathcal{P} \preceq 0, & \quad \label{matrix-inequality1d}\\
			(\frac{1-q_\phi}{q_\phi})\mathcal{ A}^\top_{3} \mathcal{P}  \mathcal{A}_{3} -\delta_5 \mathcal{P} \preceq 0. & \quad \label{matrix-inequality1E}
		\end{align}
	\end{subequations}
	The matrix inequalities in~\eqref{matrix-inequality1} are equivalent to the matrix inequality in~\eqref{matrix-inequlity} (see proof of Theorem~\ref{main-theorem}). Consequently, satisfying all the individual matrix inequalities in~\eqref{matrix-inequality1} ensures the satisfaction of~\eqref{matrix-inequlity}, thereby validating the overall condition. To proceed, we first solve the matrix inequality in~\eqref{matrix-inequality1a} for $\mathcal{P}$ and $\delta_1$. To avoid the mild bilinearity between $\mathcal{P}$ and the scalar value $\delta_1$, we initialize $\delta=\{\delta_1,\delta_2,\delta_3,\delta_4,\delta_5\}$ (cf. Algorithm~\ref{Alg1}). Once $\mathcal{P}$ is obtained by ensuring the feasibility of~\eqref{matrix-inequality1a}, we focus on solving the remaining matrix inequalities in~\eqref{matrix-inequality1b}-\eqref{matrix-inequality1E} to compute $\mathcal{F}$, $\delta_2$, $\delta_3$, $\delta_4$, and $\delta_5$. Given that
	\begin{align*}
		\mathcal{A}_1^\top \mathcal{P} \mathcal{A}_2 = \mathcal{A}_1^\top \mathcal{P}  \mathcal{P}^{-1} \mathcal{P} \mathcal{A}_2 = (\mathcal{P} \mathcal{A}_1)^\top \mathcal{P}^{-1} (\mathcal{P} \mathcal{A}_2),
	\end{align*}
	by applying Schur complement~\cite{zhang2006schur} on the matrix inequality in~\eqref{matrix-inequality1b}, one can confirm that
	\begin{subequations}\label{Linear-conditins}
		\begin{align}\notag
			&\mathcal{P} \succ 0, \mathcal{A}_1^\top  \mathcal{P} \mathcal{A}_2  - \delta_2 \mathcal{P} \preceq 0 \Leftrightarrow   \mathcal{P} \succ 0,\\\label{liner}&\delta_2\mathcal{P} - (\mathcal{P} \mathcal{A}_1)^\top  \mathcal{P}^{-1} ( \mathcal{P} \mathcal{A}_2 ) \succeq 0 \Leftrightarrow \begin{bmatrix}
				\mathcal{P} &  \mathcal{P} \mathcal{A}_2\\
				( \mathcal{P}\mathcal{A}_1)^\top &  \delta_2  \mathcal{P}
			\end{bmatrix} \succeq 0.
		\end{align}
		Similarly, applying the Schur complement to the matrix inequalities in~\eqref{matrix-inequality1c},~\eqref{matrix-inequality1d}, and \eqref{matrix-inequality1E} reveals that
		\begin{align}\label{liner1}
			&\mathcal{P} \succ 0, \mathcal{A}_2^\top  \mathcal{P} \mathcal{A}_2  - \delta_3 \mathcal{P} \preceq 0 \Leftrightarrow \begin{bmatrix}
				\mathcal{P} &  \mathcal{P} \mathcal{A}_2\\
				( \mathcal{P}\mathcal{A}_2)^\top & \delta_3 \mathcal{P}
			\end{bmatrix} \succeq 0,
			\end{align}
			\begin{align}\label{liner2}
			&\mathcal{P} \succ 0, \mathcal{A}_2^\top  \mathcal{P} \mathcal{A}_1  - \delta_4 \mathcal{P} \preceq 0 \Leftrightarrow \begin{bmatrix}
				\mathcal{P} &  \mathcal{P} \mathcal{A}_1\\
				( \mathcal{P}\mathcal{A}_2)^\top & \delta_4 \mathcal{P}
			\end{bmatrix} \succeq 0,\\\notag
			&\mathcal{P} \succ 0, (\frac{1-q_\phi}{q_\phi})\mathcal{A}_3^\top  \mathcal{P} \mathcal{A}_3  - \delta_5 \mathcal{P} \preceq 0 \\\label{liner3}&~~~~~~~~~~~~~~~~~~~~~~~~~~~~~~~~\Leftrightarrow \begin{bmatrix}
				\mathcal{P} &  \mathcal{P} \mathcal{A}_3\\
				( \mathcal{P}\mathcal{A}_3)^\top & \frac{\delta_5q_\phi}{1-q_\phi} \mathcal{P}
			\end{bmatrix} \succeq 0.
		\end{align}
	\end{subequations}
	It is worth noting that $ \frac{\delta_5q_\phi}{1-q_\phi}\neq 0$, for $q_\phi \in (0,1)$. This approach eliminates the key bilinear terms in \eqref{matrix-inequality1b} and \eqref{matrix-inequality1d}, which arise from the bilinearity between $\mathcal{P}$ and $\mathcal{F}$. Similarly, the bilinear terms in \eqref{matrix-inequality1c} and \eqref{matrix-inequality1E}, resulting from the product of $\mathcal{F}^\top$ and $\mathcal{F}$ in $\mathcal{A}_2$ and $\mathcal{A}_3$, can also be eliminated. To systematically enforce the satisfaction of the matrix inequality in~\eqref{matrix-inequlity} while computing both $\mathcal{P}$ and $\mathcal{F}$, we propose Algorithm~\ref{Alg1}, which summarizes the required steps. Having addressed the bilinearity issue in Assumption~\ref{assum}, we now present the following theorem as the central contribution of this work, enabling the design of a CBC and its associated safety controller for the dt-SLS-$\Sigma$.
	\begin{algorithm}[t]
		\caption{Computation of $\mathcal{P}$ and $\mathcal{F}$}
		\begin{algorithmic}[1]\label{Alg1}
			\REQUIRE dt-SLS-$\Sigma=(X,U,A,B,w, \tau, \theta, \phi)$, $p_\theta$, $q_\phi$, $\tau$
			\STATE Initialize $\delta = \{\delta_1, \delta_2, \delta_3, \delta_4, \delta_5\}$ such that $\sum_{i=1}^{5}\delta_i = 1$\label{Step2}
			\IF{Matrix inequality in \eqref{matrix-inequality1a} using $\delta$ (via the solver \textsf{Mosek}~\cite{mosek}) is feasible}\label{Step3}
			\RETURN Barrier matrix $\mathcal{P}$\label{Step4}
			\IF{Matrix inequalities in \eqref{liner}--\eqref{liner3} are solved using $\delta$ and $\mathcal{P}$ in step~\ref{Step3}}\label{Step5}
			\RETURN Controller gain $\mathcal{F}$
			\ENDIF
			\ENDIF
			\STATE If either Step~\ref{Step3} or Step~\ref{Step5} is not successful, repeat the algorithm from Step~\ref{Step2} with different $\delta$
			\ENSURE Barrier matrix $\mathcal{P}$ and controller matrix $\mathcal{F}$
		\end{algorithmic}
\end{algorithm}
\begin{theorem}[\textbf{CBC and safety controller}]\label{main-theorem}
	Consider the dt-SLS-$\Sigma=(X,U,A,B,w, \tau, \theta, \phi)$ described~in~Definition~\ref{dt-SLS}. Let Assumption~\ref{assum} hold, and suppose there exist $\beta,\eta \in \mathbb{R}^{+}$ with $ \beta > \eta $ such that
	\begin{subequations}\label{Opt-cons}
		\begin{align}\label{Th:con1}
			& \eta=\lambda_{\max }(\mathcal{P}) \max _{\mathcal{Z} \in X_0 \times X_0 \times U \times U}\Vert \mathcal{Z}   \Vert^2, \\\label{Th:con2}
			& \beta=\lambda_{\min }(\mathcal{P}) \min _{\mathcal{Z} \in X_1 \times X_1 \times U \times U }\Vert \mathcal{Z}   \Vert^2.
		\end{align}
	\end{subequations}
	\noindent Then $\mathbb{B}(\mathcal{Z}) = \mathcal{Z}^{\top} \mathcal{P}\mathcal{Z}$ is a CBC and $u_k$ in~\eqref{dt-SLSCC} is a safety controller with 
	\begin{align}\label{c-martingle}
		c &= \text{Tr} \left(\mathcal{D}^\top \mathcal{P} \mathcal{ D}\sigma_{\mathbf{w}} \right)\!,
	\end{align}
	where $\sigma_{\mathbf{w}}$ is the covariance matrix of the noise vector $\mathbf{w}$. 
\end{theorem}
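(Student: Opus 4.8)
The plan is to verify the three defining conditions of a CBC from Definition~\ref{barrier} for the quadratic candidate $\mathbb{B}(\mathcal{Z})=\mathcal{Z}^\top\mathcal{P}\mathcal{Z}$, and to read off the constant $c$ along the way. Conditions~\eqref{con} and~\eqref{con1} follow immediately from the Rayleigh--Ritz bounds on a quadratic form: since $\mathcal{P}\succ0$, we have $\lambda_{\min}(\mathcal{P})\Vert\mathcal{Z}\Vert^2\le\mathbb{B}(\mathcal{Z})\le\lambda_{\max}(\mathcal{P})\Vert\mathcal{Z}\Vert^2$ for every $\mathcal{Z}$. Taking the supremum over $X_0\times X_0\times U\times U$ in the upper bound yields $\mathbb{B}(\mathcal{Z})\le\eta$ with $\eta$ as in~\eqref{Th:con1}, and taking the infimum over $X_1\times X_1\times U\times U$ in the lower bound yields $\mathbb{B}(\mathcal{Z})\ge\beta$ with $\beta$ as in~\eqref{Th:con2}. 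These two steps are routine.

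The substance of the argument is the supermartingale-type condition~\eqref{con3}. First I would substitute the augmented dynamics $\mathcal{Z}_{k+1}=\mathcal{A}\mathcal{Z}_k+\mathcal{D}\mathbf{w}_k$ from~\eqref{eq:augmented} into $\mathbb{B}(\mathcal{Z}_{k+1})$ and expand the quadratic form into three groups of terms: a term $\mathcal{Z}_k^\top\mathcal{A}^\top\mathcal{P}\mathcal{A}\mathcal{Z}_k$ that is quadratic in the (random) matrix $\mathcal{A}$, a cross term linear in $\mathbf{w}_k$, and the purely noise term $\mathbf{w}_k^\top\mathcal{D}^\top\mathcal{P}\mathcal{D}\mathbf{w}_k$. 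Conditioning on $\mathcal{Z}_k$ and using that $\mathbf{w}_k$ is zero-mean and independent of both $\mathcal{Z}_k$ and the Bernoulli variable inside $\mathcal{A}$, the cross term vanishes in expectation, while the noise term contributes $\EE[\mathbf{w}_k^\top\mathcal{D}^\top\mathcal{P}\mathcal{D}\mathbf{w}_k]=\Tr(\mathcal{D}^\top\mathcal{P}\mathcal{D}\,\sigma_{\mathbf{w}})$, which is exactly the claimed constant $c$ in~\eqref{c-martingle}.

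It then remains to bound $\mathcal{Z}_k^\top\EE[\mathcal{A}^\top\mathcal{P}\mathcal{A}]\mathcal{Z}_k$. Here I would invoke the decomposition~\eqref{decopmposition}, $\mathcal{A}=\mathcal{A}_1+\mathcal{A}_2+\zeta_\phi\mathcal{A}_3$, and expand $\mathcal{A}^\top\mathcal{P}\mathcal{A}$ into its nine constituent terms. Taking the expectation over $\zeta_\phi$ and using $\EE[\zeta_\phi]=0$ eliminates every term carrying a single factor of $\zeta_\phi$ (the cross terms coupling $\mathcal{A}_3$ with $\mathcal{A}_1$ and $\mathcal{A}_2$), while $\EE[\zeta_\phi^2]=\tfrac{1-q_\phi}{q_\phi}$ rescales the $\mathcal{A}_3^\top\mathcal{P}\mathcal{A}_3$ term; what survives is precisely the left-hand side of~\eqref{matrix-inequlity} before subtracting $\mathcal{P}$. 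Assumption~\ref{assum} then gives $\EE[\mathcal{A}^\top\mathcal{P}\mathcal{A}]\preceq\mathcal{P}$, hence $\mathcal{Z}_k^\top\EE[\mathcal{A}^\top\mathcal{P}\mathcal{A}]\mathcal{Z}_k\le\mathbb{B}(\mathcal{Z}_k)$, and combining with the noise contribution establishes~\eqref{con3}. For completeness I would also record that summing the five split inequalities~\eqref{matrix-inequality1a}--\eqref{matrix-inequality1E} with $\sum_{i=1}^5\delta_i=1$ reproduces~\eqref{matrix-inequlity}, justifying the equivalence exploited by Algorithm~\ref{Alg1}.

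The main obstacle is the expectation over the randomness embedded in $\mathcal{A}$: one must argue carefully that $\EE[\zeta_\phi]=0$ genuinely annihilates all the $\mathcal{A}_3$ cross terms (this is the payoff of the reparametrization $\phi=q_\phi(1-\zeta_\phi)$), and that $\zeta_\phi$ is independent of $\mathbf{w}_k$ and measurable with respect to the step-$k$ transition so that conditioning on $\mathcal{Z}_k$ is legitimate. Everything else reduces to a direct expansion of quadratic forms.
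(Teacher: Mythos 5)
Your proposal is correct and follows essentially the same route as the paper's own proof: expand $\mathbb{B}(\mathcal{Z}_{k+1})$ along the augmented dynamics, let the zero-mean noise kill the cross term and produce $c=\Tr(\mathcal{D}^\top\mathcal{P}\mathcal{D}\sigma_{\mathbf{w}})$, use $\EE[\zeta_\phi]=0$ and $\EE[\zeta_\phi^2]=\tfrac{1-q_\phi}{q_\phi}$ in the decomposition $\mathcal{A}=\mathcal{A}_1+\mathcal{A}_2+\zeta_\phi\mathcal{A}_3$ to reduce to the matrix inequality of Assumption~\ref{assum}, and verify~\eqref{con}--\eqref{con1} by the Rayleigh--Ritz bounds. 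No gaps.
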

\begin{proof}
		We first show that the matrix inequality in~\eqref{matrix-inequlity} implies satisfaction of condition~\eqref{con3}. By applying function $\mathbb{B}$ on the augmented dt-ASLS-$\Lambda$ in~\eqref{aug-system}, one has
		\begin{align*}
			\mathbb{B}(\mathcal{Z}_{k+1}) \!\!= \mathcal{Z}_k^{\top} \mathcal{A}^{\top} {\mathcal{P}} 	\mathcal{A} \mathcal{Z}_k \!+\! 2\mathbf{w}_k^{\top} 	\mathcal{D}^{\top} {\mathcal{P}} \mathcal{A} \mathcal{Z}_k\!+\!\mathbf{w}_k^{\top} \mathcal{D}^{\top} {\mathcal{P}} \mathcal{D}\mathbf{w}_k.
		\end{align*} 
		Taking expectation from $\mathbb{B}$ and since $\mathbf{w}\!\sim\! \mathcal{N}(\mathbf{0}_{\psi},\sigma_{\mathbf{w}})$, with $\sigma_{\mathbf{w}}=\mathsf{diag}(\sigma_{w_{k}},\cdots,\sigma_{w_{k-\tau}})$, we have
		\begin{align}\label{proof1}
			\EE\big[\mathbb{B}(\mathcal{Z}_{k+1}) \mid \mathcal{Z}_k\big] \!=\! \EE\left[\mathcal{Z}_k^{\top} \mathcal{A}^{\top} {\mathcal{P}} \mathcal{A} \mathcal{Z}_k\right] 
			\!+\! \EE\left[\mathbf{w}_k^{\top} \mathcal{D}^{\top} {\mathcal{P}} \mathcal{D}\mathbf{w}_k\right]\!\!.
		\end{align}
		Substituting $\mathcal{A}$ from~\eqref{decopmposition} into~\eqref{proof1} and given that $\EE[\zeta_\phi]=0$, we obtain
		\begin{align}\notag
			\EE\big[\mathbb{B}(&\mathcal{Z}_{k+1}) \mid \mathcal{Z}_k\big] \\\notag
			&= \EE\left[\mathcal{Z}_k^{\top} \mathcal{A}_1^{\top} {\mathcal{P}} \mathcal{A}_2 
			\mathcal{Z}_k\right]+
			\EE\left[\mathcal{Z}_k^{\top} \mathcal{A}_2^{\top} {\mathcal{P}} \mathcal{A}_1 \mathcal{Z}_k\right] \\\notag&~~~+\EE\left[\mathcal{Z}_k^{\top} \mathcal{A}_2^{\top} {\mathcal{P}} \mathcal{A}_2 
			\mathcal{Z}_k\right]+\EE\left[\mathcal{Z}_k^{\top} \zeta_\phi^2 \mathcal{A}_3^{\top} {\mathcal{P}} \mathcal{A}_3 
			\mathcal{Z}_k\right]\\\notag
			&~~~+\EE\left[\mathcal{Z}_k^{\top} \mathcal{A}_1^{\top} {\mathcal{P}} \mathcal{A}_1 \mathcal{Z}_k\right]+\text{Tr}\big(\mathcal{D}^{\top} {\mathcal{P}} \mathcal{D} \EE\left[\mathbf{w}_k\mathbf{w}_k^{\top}\right]\big)\\\notag
			&=\mathcal{Z}_k^{\top}\Big(\mathcal{A}_{1}^\top \mathcal{P}\mathcal{A}_{1} +\mathcal{A}_{1}^\top \mathcal{P}\mathcal{A}_{2}\\\notag&~~~+  \mathcal{A}_{2}^\top \mathcal{P}\mathcal{A}_{2} + \mathcal{A}_{2}^\top \mathcal{P}\mathcal{A}_{1}+\EE[\zeta_\phi^2]\mathcal{A}_{3}^\top  \mathcal{P}\mathcal{A}_{3}  \Big)\mathcal{Z}_k\\\notag &~~~ +\text{Tr}\big(\mathcal{D}^{\top} {\mathcal{P}} \mathcal{D} \underbrace{\EE\left[\mathbf{w}_k\mathbf{w}_k^{\top}\right]}_{\sigma_{\mathbf{w}}}\big).
		\end{align}
		By employing the matrix inequalities in~\eqref{matrix-inequality1} and since $\textsf{Var}[\zeta_\phi]=\EE[\zeta_\phi^2]=\frac{1-q_\phi}{q_\phi}$, one can readily show that
		\begin{equation*}
			\EE\Big[ \mathbb{B}(\mathcal{Z}_{k+1}) \big| \mathcal{Z}_k \Big] \!\leq\! \mathbb{B}(\mathcal{Z}_k)+c.
		\end{equation*}
		with $c $ given in~\eqref{c-martingle}.
		To finalize the proof, we now show conditions~\eqref{con} and~\eqref{con1}. It is well known that  
		\[
		\mathbb{B}(\mathcal{Z}) = \mathcal{Z}^{\top} \mathcal{P} \mathcal{Z} \leq \lambda_{\max}(\mathcal{P}) \Vert \mathcal{Z}   \Vert^2.
		\]
		Additionally, \(\forall \mathcal{Z} \in X_0  \!\times\! X_0 \!\times\! U \!\times\! U \), we have  
		\[
		\lambda_{\max}(\mathcal{P}) \Vert \mathcal{Z}   \Vert^2 \leq \lambda_{\max}(\mathcal{P}) \max_{\mathcal{Z} \in  X_0 \times X_0\times U \times U } \Vert \mathcal{Z}   \Vert^2,
		\]
		which confirms the validity of condition~\eqref{con} with  
		\[
		\eta = \lambda_{\max}(\mathcal{P}) \max_{\mathcal{Z} \in  X_0 \times X_0 \times U \times U } \Vert \mathcal{Z}   \Vert^2,
		\]
		as defined in~\eqref{Th:con1}. Similarly, applying the same reasoning, we conclude that condition~\eqref{con1} holds with  
		\[
		\beta=\lambda_{\min }(\mathcal{P}) \min _{\mathcal{Z} \in X_1 \times X_1 \times U \times U} \Vert \mathcal{Z}   \Vert^2,
		\]
		as defined in~\eqref{Th:con2}, which completes the proof. 
\end{proof}

Given that the system evolves probabilistically, in the following theorem, and under the fundamental results in~\cite{ref26}, we employ Definition~\ref{barrier} and obtain an upper bound for the probability that trajectories of dt-SLS-$\Sigma$ reach an unsafe region within a finite time horizon.
\begin{theorem}[\textbf{Safety guarantee under delay and data loss}]\label{Kushner}
	Consider a dt-SLS-$\Sigma=(X,U,A,B,w, \tau, \theta, \phi)$ as in Definition~\ref{dt-SLS} subject to delay and data loss. Let there exist a CBC $\mathbb{B}$ for the augmented system dt-ASLS-$\Lambda$~in~\eqref{aug-system} as in Definition~\ref{barrier}. Then the probability that a trajectory $\mathsf{x}_k$ of dt-SLS-$\Sigma$, starting from any initial state $x_0 \in X_0$ under an input signal $u_k$ and process noise $w_k$ reaches $X_1$ within a time horizon $k \in [0,\mathcal T]$ is
	\begin{equation}\label{zet}
		\PP \Big\{\mathsf{x}_k \in X_1~\text{for some}~ k \in [0,\mathcal T] \,\big|\, x_0\Big\} \leq \xi_{\tau\theta\phi},
	\end{equation} 
	where $\xi_{\tau\theta\phi} = \frac{\eta + c \mathcal T}{\beta}.$
\end{theorem}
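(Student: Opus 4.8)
The plan is to recognize that the drift condition~\eqref{con3} turns the nonnegative process $\mathbb{B}(\mathcal{Z}_k)$ into a $c$-supermartingale, and then to invoke the supermartingale inequality underlying Kushner's results~\cite{ref26} to bound the probability that this process ever reaches the level $\beta$ within the horizon $[0,\mathcal{T}]$. Nonnegativity is automatic since $\mathbb{B}$ takes values in $\mathbb{R}^{+}_{0}$, and the inequality $\EE[\mathbb{B}(\mathcal{Z}_{k+1}) \mid \mathcal{Z}_k] \leq \mathbb{B}(\mathcal{Z}_k) + c$ established in Theorem~\ref{main-theorem} provides exactly the drift bound required, with respect to the natural filtration generated by the augmented state $\mathcal{Z}_k$.

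First I would reduce the reachability event to a level-crossing event. Condition~\eqref{con1} guarantees that any augmented state $\mathcal{Z}$ whose plant component lies in $X_1$ satisfies $\mathbb{B}(\mathcal{Z}) \geq \beta$; consequently, the event that $\mathsf{x}_k \in X_1$ for some $k \in [0,\mathcal{T}]$ is contained in the event $\{\sup_{0 \leq k \leq \mathcal{T}} \mathbb{B}(\mathcal{Z}_k) \geq \beta\}$, so it suffices to bound the probability of the latter. Applying the Kushner inequality for $c$-supermartingales then yields
\[
\PP\Big\{\sup_{0 \leq k \leq \mathcal{T}} \mathbb{B}(\mathcal{Z}_k) \geq \beta \,\Big|\, \mathcal{Z}_0\Big\} \leq \frac{\mathbb{B}(\mathcal{Z}_0) + c\mathcal{T}}{\beta}.
\]
Finally, since the trajectory starts from $x_0 \in X_0$, condition~\eqref{con} gives $\mathbb{B}(\mathcal{Z}_0) \leq \eta$; substituting this bound into the displayed inequality produces $\frac{\eta + c\mathcal{T}}{\beta} = \xi_{\tau\theta\phi}$, which is precisely~\eqref{zet} and completes the argument.

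The hard part will be the rigorous invocation of the maximal inequality in the present \emph{nonzero-drift} setting, rather than the pure-supermartingale case. The cleanest route is to pass to the auxiliary process $M_k := \mathbb{B}(\mathcal{Z}_k) - ck$, verify that $\EE[M_{k+1} \mid \mathcal{Z}_k] \leq M_k$ so that $M_k$ is a genuine supermartingale, and then apply Ville's/Doob's maximal inequality to $M_k$; tracking the additive constant across the horizon is what produces the $c\mathcal{T}$ term in the numerator. A secondary point requiring care is that the conditioning runs over the \emph{entire} augmented state $\mathcal{Z}_k$—which encodes the delayed plant states and the Bernoulli-driven control inputs—so I would confirm that the drift bound of Theorem~\ref{main-theorem} holds conditionally for every realization of $\mathcal{Z}_k$, guaranteeing the supermartingale property with respect to the correct filtration.
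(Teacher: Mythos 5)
Your proposal is correct and follows essentially the same route as the paper: reduce the reachability event to the level-crossing event $\{\sup_{0\le k\le \mathcal T}\mathbb{B}(\mathcal{Z}_k)\ge\beta\}$ via condition~\eqref{con1}, invoke Kushner's supermartingale-type maximal inequality together with the drift bound~\eqref{con3}, and bound the initial value by $\eta$ via~\eqref{con}. The paper simply cites \cite[Theorem 3]{ref26} for the maximal inequality, whereas you additionally sketch its derivation through the auxiliary supermartingale $M_k=\mathbb{B}(\mathcal{Z}_k)-ck$; this is a fair elaboration of the cited result rather than a different approach.
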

\begin{proof}
		According to condition (\ref{con1}), we know $$X_{1}\times X_1 \times U \times U \subseteq \big\{\mathcal{Z} \in X\times X \times U \times U \enspace |\enspace  \mathbb{B}(\mathcal{Z}) \geq \beta \big\}.$$ We, therefore, have
		\begin{align*}  
			\PP &\Big\{\mathsf{x}_k \in X_1\!\times\! X_1\times U \times U ~\text{for some}~ k \in [0,\mathcal T] \,\big|\, x_0 \Big\}\\&~~~  \leq \PP\Big\{\sup_{0\leq k\leq \mathcal T} \enspace \mathbb{B}(\mathcal{Z}) \geq \beta \,\big|\, x_0 \Big\}.
		\end{align*}
		\noindent Then the proposed bound in (\ref{zet}) can be obtained by applying \cite[Theorem 3]{ref26} to (\ref{zet}) and employing conditions~\eqref{con} and~\eqref{con3} in Definition~\ref{barrier}.$\hfill$
\end{proof}

Under Theorem~\ref{Kushner}, the satisfaction of the safety specification $\Upsilon$ within the same time horizon $\mathcal T$ can be guaranteed as
\begin{align}\label{Prob-bound}
	\PP \Big\{\Sigma \models\Upsilon\Big\}\ge 1-\underbrace{\frac{\eta + c \mathcal T}{\beta}}_{\xi_{\tau\theta\phi}}.
\end{align}

The procedure for constructing the CBC and its controller under communication constraints is outlined in Algorithm~\ref{Alg2}.\vspace{0.2cm}

\begin{algorithm}[t]
	\caption{CBC and safety controller construction}
	\label{Alg2}
\begin{algorithmic}[1]
		\REQUIRE dt-SLS-$\Sigma=(X,U,A,B,w, \tau, \theta, \phi)$, $p_\theta$, $q_\phi$, $\tau$, $\Upsilon = (X_0, X_1, \mathcal T)$
		\STATE Compute barrier matrix $\mathcal{P}$ and controller matrix $\mathcal{F}$ using Algorithm~\ref{Alg1}
		\STATE Using the constructed $\mathcal{P}$, compute $c$ as defined in \eqref{c-martingle}
		\STATE Design level sets $\eta, \beta$ in conditions \eqref{Th:con1} and \eqref{Th:con2} using the constructed $\mathbb{B}(\mathcal{Z})$
		\STATE Compute the lower-bound probability $1-\frac{\eta + c \mathcal{T}}{\beta}$
		\ENSURE CBC $\mathbb{B}(\mathcal{Z}) = \mathcal{Z}^{\top}\mathcal{P}\mathcal{Z}$, safety controller $u_k$, and probabilistic safety guarantee~in~\eqref{Prob-bound} 
	\end{algorithmic}
\end{algorithm}
\vspace{-3mm}

\section{Simulation Results}\label{sec:sim}
We illustrate the effectiveness of our proposed framework by applying it to an RLC circuit. We consider communication channels with the probabilities of \emph{successful} data transmission for the uplink and downlink channels given by $p_{\theta} = 0.93$ and $q_{\phi} = 0.90$, while the delay is $\tau = 3$. The RLC circuit dynamics are described as~\cite{9682889}
\begin{equation*}
		\begin{cases}
			x_{1(k+1)}=x_{1k}+\Delta\left(-\frac{R}{L} x_{1k}-\frac{1}{L} x_{2k}\right)+u_{1k} + w_{1k}, \\
			x_{2(k+1)}=x_{2k}+\Delta\left(\frac{1}{C} x_{1k}\right)+u_{2k}+w_{2k},
		\end{cases}
\end{equation*}
where $x_{1k}$ and $x_{2k}$ represent the current and voltage of the circuit, respectively, $\Delta = 0.05$ is the sampling time, $R = 2$ is the series resistance, $L = 9$ is the series inductance, $C = 0.5$ is the circuit capacitance, and $w_{1k},w_{2k} \sim \mathcal{N}(0,0.1)$.
\footnotetext[1]{Due to high dimension of $\mathcal{P} \in \mathbb{R}^{30 \times 30}$, we omit its representation for brevity.}

\begin{figure}[htbp]
	\centering
	\begin{subfigure}[b]{0.48\linewidth}
		\includegraphics[width=\linewidth]{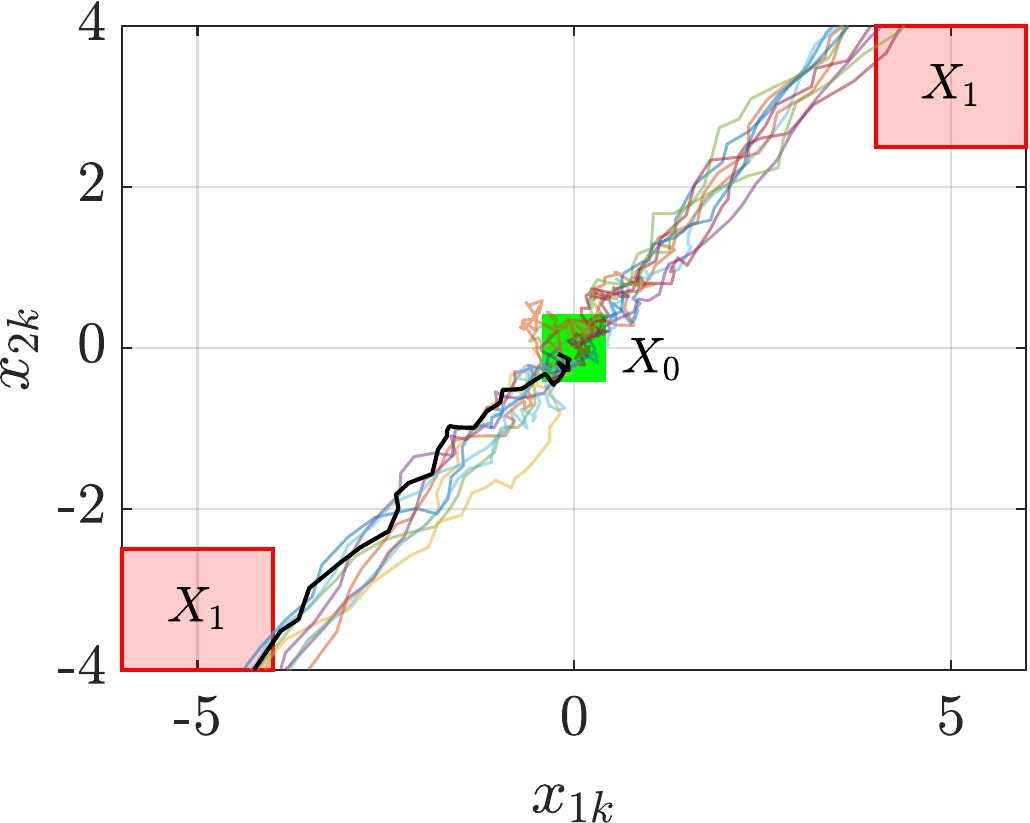}
		\subcaption{Open-loop trajectories}
		\label{fig:open-loop}
	\end{subfigure}
	\hfill
	\begin{subfigure}[b]{0.48\linewidth}
		\includegraphics[width=\linewidth]{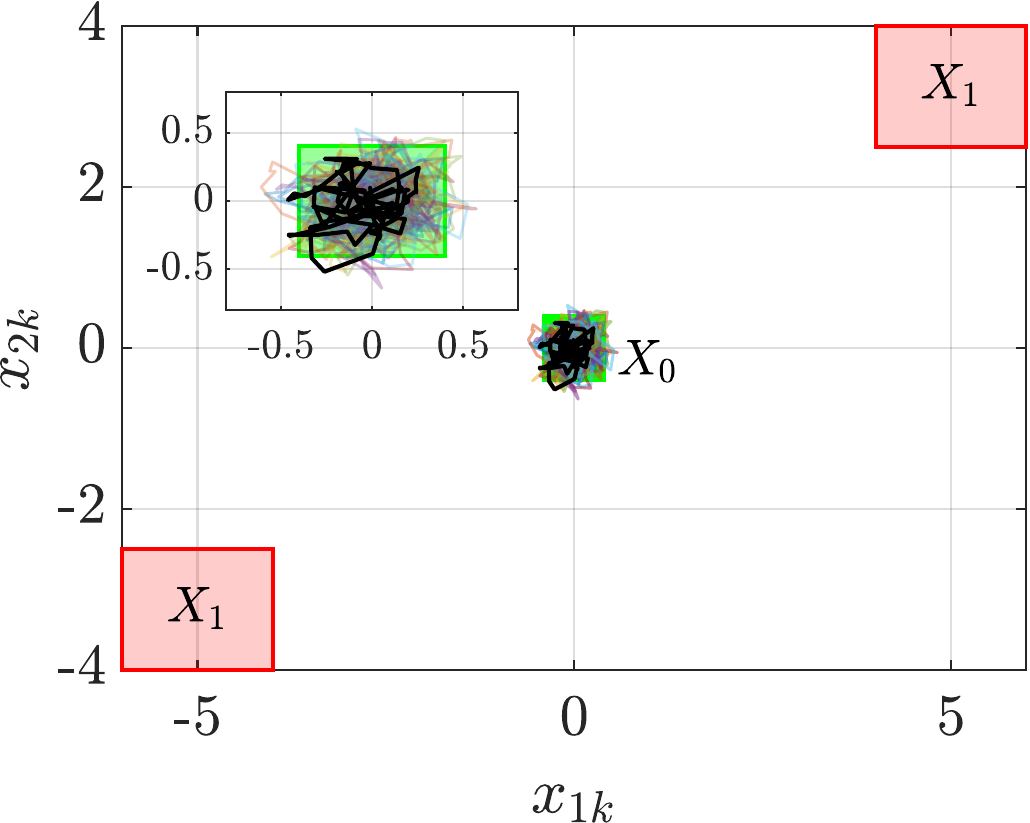}
		\subcaption{Closed loop with $\tau=1$}
		\label{fig:tau-1}
	\end{subfigure}
	\vskip\baselineskip
	\begin{subfigure}[b]{0.48\linewidth}
		\includegraphics[width=\linewidth]{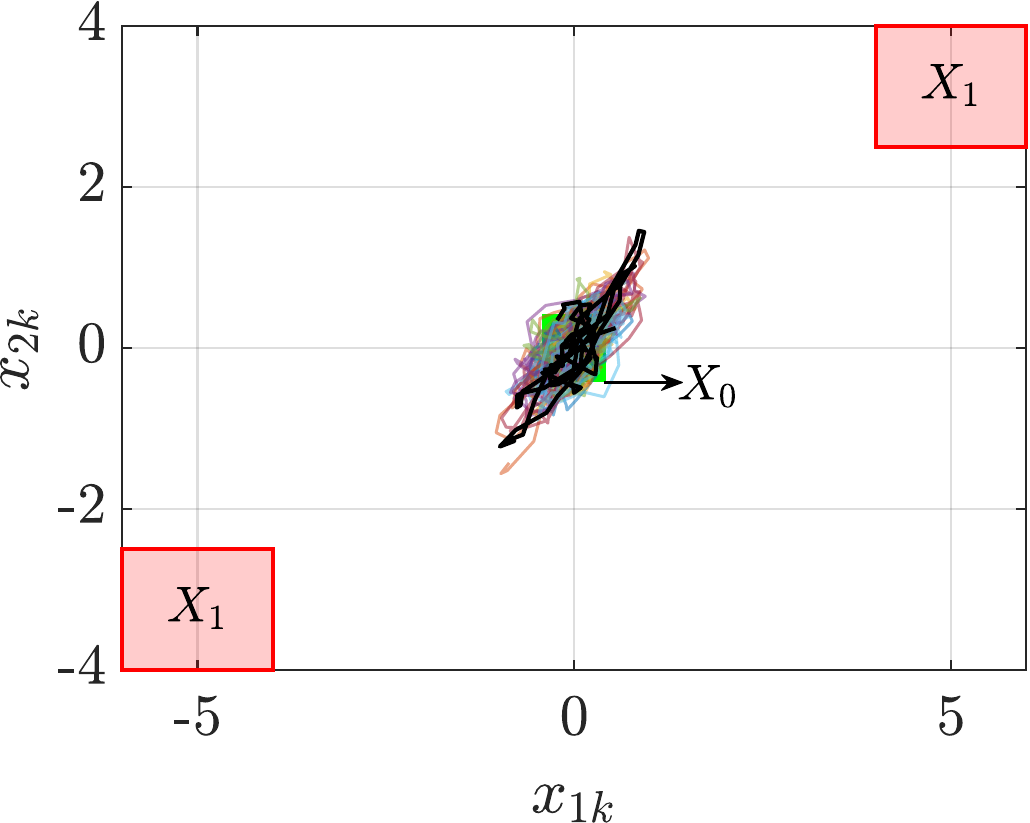}
		\subcaption{Closed loop with $\tau=3$}
		\label{fig:tau-3}
	\end{subfigure}
	\hfill
	\begin{subfigure}[b]{0.48\linewidth}
		\includegraphics[width=\linewidth]{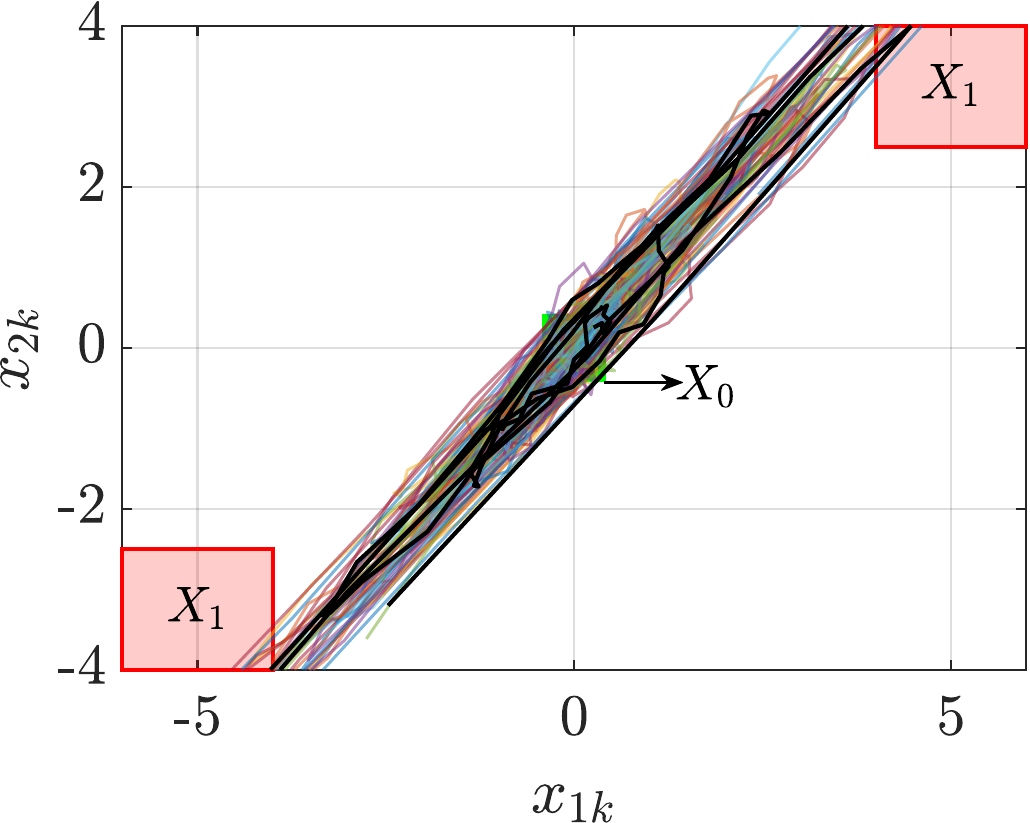}
		\subcaption{Closed loop with $\tau=4$}
		\label{fig:tau-4}
	\end{subfigure}
	
	\caption{
		Trajectories of the RLC circuit over 20 simulation runs with distinct noise realizations starting from the initial set $X_0=\left[-0.4,\,0.4\right]^2$. Initial and unsafe regions are depicted by green \protect\greensquare\ and red \protect\reddsquare\ boxes, respectively. Plot (a) shows the \emph{open-loop} trajectories, while the remaining subplots show closed-loop trajectories with the proposed controller $\mathcal{F}$ with delays of $\tau=1$~(b) and $\tau=3$~(c). Plot (d) illustrates trajectories with delay $\tau = 4$ over $\mathcal{T}=100$, with the controller gain~\eqref{designed-controller} designed for $\tau \leq 3$, showing violation of the safety property $\Upsilon$, as expected.}
	\label{fig:states}
\end{figure}
The regions of interest are given as $X = [-6,6]\times [-4,4]$, $X_0 = [-0.4, 0.4]^2$, and $X_1 = [-6, -4] \times [-4, -2.5] \cup [4, 6] \times [2.5, 4]$. The aim is to synthesize a CBC and a corresponding safety controller for the RLC circuit, ensuring that the RLC circuit's states remain within the safe set $X \backslash X_1$. To this end, following Algorithm~\ref{Alg1}, we initially use the solver \textsf{Mosek} to compute a barrier matrix $\mathcal{P}$\footnotemark[1] and its associated controller gain $\mathcal{F}$ as
\begin{equation}\label{designed-controller}
	\mathcal{F}=\begin{bmatrix}
		-0.2634 & -0.09317\\ -0.09047 & -0.2761 \end{bmatrix}\!\!.
\end{equation}

With the obtained $\mathcal{P}$ and under Algorithm~\ref{Alg2}, we proceed to determine the corresponding initial and unsafe level sets $\eta$ and $\beta$, by satisfying conditions~\eqref{Th:con1} and ~\eqref{Th:con2}, as $\eta =  0.0228$ and $\beta =0.3131 $. It is worth noting that condition~\eqref{Th:con2} should be satisfied separately for both unsafe regions. Furthermore, the constant $c$ is calculated as $7.6307 \times 10^{-5}$. Applying Theorem~\ref{Kushner}, we can guarantee that all trajectories of the RLC circuit with uplink delay of $\tau=3$ initialized from $X_0 = [-0.4, 0.4]^2$, remain within the safe set $X \backslash X_1$ over the time horizon $\mathcal{T} = 100$ with $\PP \Big\{\Sigma \models\Upsilon\Big\}\ge 0.9$. 

As shown in Fig.~\ref{fig:states}, the control gain $\mathcal{F}$ maintains system safety for uplink delays $\tau \leq 3$. In fact, no violations of the safety specification $\Upsilon$ were observed during simulation runs with various noise realizations, for uplink delays $\tau \leq 3$ over the time horizon $\mathcal{T} = 100$. However, the safety specification is violated when the uplink delay $\tau$ reaches $4$ or higher. Fig.~\ref{fig:data-loss} displays the number of lost state and control data corresponding to each of the $20$ noise realizations over the time horizon $\mathcal{T} = 100$. 

It is worth noting that the controller in~\eqref{designed-controller} is designed for a delay of $\tau=3$ and therefore cannot satisfy the system’s safety specification when subjected to delays $\tau>3$, as illustrated in Fig.~\ref{fig:states}, subplot (d). Nevertheless, for delays $\tau>3$ and different values of $q_{\phi}$ and $p_{\theta}$, one may attempt to re-solve the condition in~\eqref{matrix-inequlity} together with those in~\eqref{Opt-cons} to synthesize safety controllers and their corresponding barrier certificates. We also note that the role of the downlink transmission success probability, $q_\phi$, in condition~\eqref{matrix-inequlity} shows that increasing $q_\phi$ improves the likelihood of satisfying condition~\eqref{matrix-inequlity}, as expected. A similar effect is anticipated for the uplink transmission success probability, $p_\theta$.
\begin{figure}[tp!]
	\centering
	\includegraphics[height=0.38\linewidth]{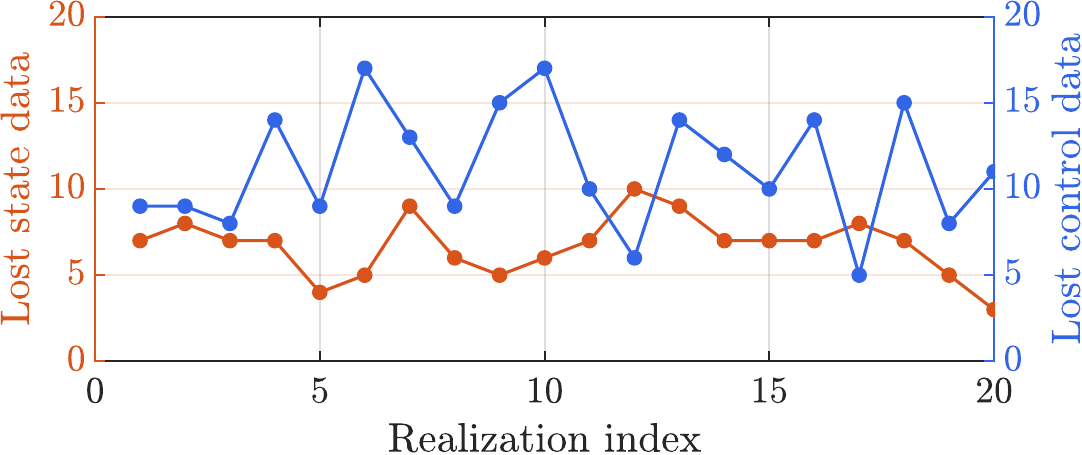}
	\caption{The number of lost states ($\theta_k = 0$) and lost control ($\phi_k = 0$) data for RLC circuit with $\tau=3$ over a time horizon of $\mathcal{T} = 100$ in the uplink (\textbf{orange}) and downlink (\textbf{blue}) channels for $20$ different simulation runs.}
	\label{fig:data-loss}\vspace{-2mm}
\end{figure}

\section{Conclusion}\label{sec:conclusion}
We proposed a framework for analyzing the safety of dt-SLS experiencing communication imperfections due to wireless network constraints. Specifically, we accounted for constant delays in the uplink channel and data loss modeled as independent Bernoulli processes in both uplink and downlink channels. By introducing an augmented dt-ASLS, we effectively captured the dynamics of the original system under delay and data loss constraints. We employed CBCs to design controllers capable of ensuring safety specifications in a probabilistic sense, translating safety requirements into matrix inequalities. We quantified probabilistic guarantees for maintaining system safety despite communication imperfections. We demonstrated our framework through simulations involving an RLC circuit. Extending our framework to accommodate networked control systems with nonlinear dynamics is under investigation as future work.

\bibliographystyle{IEEEtran} 
\bibliography{biblio}

\appendix

	\section{Appendix}
	\noindent The entries of the matrix $\mathcal{A}$ in~\eqref{aug-system} are defined as
	\begin{align}\notag
		\mathcal{A}_{11} &= \begin{bmatrix} \boldsymbol{0}_{n \times (\psi-n)}  & A^{\tau+1}\\
			\mathbf{I}_{(\psi-n)} & \mathbf{0}_{(\psi-n) \times n}
		\end{bmatrix}_{\psi \times \psi}\!\!, \quad
		\mathcal{A}_{12} =  \mathbf{0}_{\psi \times 2n}\\\notag
		\mathcal{A}_{13} &= \begin{bmatrix}  B \!\!&\!\! \mathbf{0}_{n\times m} \!\!&\!\!  \bar{\phi} AB \!\!&\!\! \dots\!\!&\!\!  \bar{\phi} A^{\tau}B\\
			\mathbf{0}_{(\psi-n) \times m} \!\!&\!\! \mathbf{0}_{(\psi-n) \times m} \!\!&\!\! \mathbf{0}_{(\psi-n) \times m} \!\!&\!\! \dots\!\!&\!\! \mathbf{0}_{(\psi-n) \times m}
		\end{bmatrix} \\\notag
		\mathcal{A}_{14} &= \begin{bmatrix}  \mathbf{0}_{n \times m}  \!\!&\!\!  \phi AB \!\!&\!\! \dots\!\!&\!\!  \phi A^{\tau}B\\
			\mathbf{0}_{(\psi-n) \times m} \!\!&\!\! \mathbf{0}_{(\psi-n) \times m} \!\!&\!\! \dots\!\!&\!\! \mathbf{0}_{(\psi-n) \times m}
		\end{bmatrix}_{\psi \times \varpi}\\\notag
		\mathcal{A}_{21} &= \begin{bmatrix} \boldsymbol{0}_{n \times (\psi-n)} & p_\theta A^{\tau+1} \\
			\mathbf{0}_{n \times (\psi-n)} & 	\mathbf{0}_{n \times n}
		\end{bmatrix}_{2n \times \psi}\\\notag
		\mathcal{A}_{22} &= \begin{bmatrix} \boldsymbol{0}_{n \times n} & (1-p_{\theta}) A^{2} \\
			\mathbf{I}_{n} & \mathbf{0}_{n \times n}
		\end{bmatrix}_{2n \times 2n}\!\!,
		\quad \mathcal{A}_{23} =  \mathbf{0}_{2n \times (\varpi+m)}\\\notag
		\mathcal{A}_{24} &= \begin{bmatrix}  B &   AB & p_\theta A^{2}B  & \dots& p_\theta A^{\tau}B\\
			\mathbf{0}_{n \times m}& \mathbf{0}_{n \times m}  & \mathbf{0}_{n \times m} & \dots& \mathbf{0}_{n \times m}
		\end{bmatrix}_{2n \times \varpi} \\\notag
		\mathcal{A}_{31} &= \begin{bmatrix}
			\mathbf{0}_{m \times (\psi-n)}  & \phi p_\theta  \mathcal{F}A^{\tau+1} \\
			\mathbf{0}_{\varpi \times (\psi-n)} & \mathbf{0}_{\varpi \times n} 
		\end{bmatrix}_{(\varpi+m) \times \psi}\\\notag
		\mathcal{A}_{32} &= \begin{bmatrix}
			\mathbf{0}_{m \times n} & \phi (1-p_\theta)\mathcal{F}A^2  \\
			\mathbf{0}_{\varpi \times n} & 	\mathbf{0}_{\varpi \times n}
		\end{bmatrix}_{(\varpi+m) \times 2n}\\\notag
		\mathcal{A}_{33} &= \begin{bmatrix}
			\bar{\phi}\begin{bmatrix}\mathbf{I}_{m} &\mathbf{0}_{m \times (\varpi-m)}\end{bmatrix}_{m\times\varpi}  & \mathbf{0}_{m \times m}  \\
			\mathbf{I}_{\varpi \times \varpi} & 	\mathbf{0}_{\varpi \times m}
		\end{bmatrix}_{(\varpi+m) \times (\varpi+m)}\\\notag
		\mathcal{A}_{34} \!&\!= \begin{bmatrix}  \phi \mathcal{F}B \!\!&\!\!  \phi\mathcal{F}AB \!\!&\!\! \phi p_\theta \mathcal{F}A^{2}B  \!\!&\!\! \dots\!\!&\!\! \phi p_\theta \mathcal{F}A^{\tau}B\\
			\mathbf{0}_{\varpi \times m} \!\!&\!\! \mathbf{0}_{\varpi \times m}  \!\!&\!\! \mathbf{0}_{\varpi \times m} \!\!&\!\! \dots\!\!&\!\! \mathbf{0}_{\varpi \times m}
		\end{bmatrix}\\\notag
		\mathcal{A}_{41} \!&\!= \begin{bmatrix} \boldsymbol{0}_{m \times (\psi-n)} \!\!&\!\! p_\theta \mathcal{F}A^{\tau+1} \\
			\mathbf{0}_{(\varpi-m) \times (\psi-n)} & 	\mathbf{0}_{(\varpi-m)  \times n}
		\end{bmatrix}_{\varpi \times \psi}\\\notag
		\mathcal{A}_{42} &= \begin{bmatrix} \boldsymbol{0}_{m \times n} & (1-p_{\theta}) \mathcal{F}A^{2} \\
			\boldsymbol{0}_{(\varpi-m) \times n} & \mathbf{0}_{(\varpi-m)  \times n}
		\end{bmatrix}_{\varpi \times 2n}\!\!,
		\quad 
		\mathcal{A}_{43} \!=\!  \mathbf{0}_{\varpi \times (\varpi+m)}\\
		\mathcal{A}_{44} \!&\!= \begin{bmatrix}  \begin{bmatrix}\mathcal{F}B \!\!&\!\!  \mathcal{F}AB \!\!&\!\! p_\theta \mathcal{F}A^{2}B\!\!&\!\! \dots\!\!&\!\! p_\theta \mathcal{F}A^{(\tau-1)}B\end{bmatrix}  \!\!&\!\! p_\theta \mathcal{F}A^{\tau}B\\\label{A-elements}
			\mathbf{I}_{(\varpi-m) \times (\varpi-m)} \!\!&\!\! \mathbf{0}_{(\varpi-m)\times m}
		\end{bmatrix}
	\end{align}
	The new entries $\mathcal A^a_{ij}$ and $\mathcal A^b_{ij}$ in \eqref{Newjh} are defined as
	\begin{align}\notag
		\mathcal{A}_{13}^a &= \begin{bmatrix}  B \!\!&\!\! \mathbf{0}_{n\times \varpi}\\
			\mathbf{0}_{(\psi-n) \times m} \!\!&\!\! \mathbf{0}_{(\psi-n) \times \varpi}
		\end{bmatrix}_{\psi \times (\varpi+m)} \\\notag
		\mathcal{A}_{13}^b &= \begin{bmatrix}  \mathbf{0}_{n\times 2m} \!\!&\!\!   AB \!\!&\!\! \dots\!\!&\!\!A^{\tau}B\\
			\mathbf{0}_{(\psi-n) \times 2m} \!\!&\!\! \mathbf{0}_{(\psi-n) \times m} \!\!&\!\! \dots\!\!&\!\! \mathbf{0}_{(\psi-n) \times m}
		\end{bmatrix}_{\psi \times (\varpi+m)} \\\notag
		\mathcal{A}_{14}^a &= \begin{bmatrix}  \mathbf{0}_{n \times m}  \!\!&\!\!  AB \!\!&\!\! \dots\!\!&\!\!  A^{\tau}B\\
			\mathbf{0}_{(\psi-n) \times m} \!\!&\!\! \mathbf{0}_{(\psi-n) \times m} \!\!&\!\! \dots\!\!&\!\! \mathbf{0}_{(\psi-n) \times m}
		\end{bmatrix}_{\psi \times \varpi} \\\notag
		\mathcal{A}_{31}^a &= \begin{bmatrix}
			\mathbf{0}_{m \times (\psi-n)}  & p_\theta  \mathcal{F}A^{\tau+1} \\
			\mathbf{0}_{\varpi \times (\psi-n)} & \mathbf{0}_{\varpi \times n} 
		\end{bmatrix}_{(\varpi+m) \times \psi}\\\notag
		\mathcal{A}_{32}^a &= \begin{bmatrix}
			\mathbf{0}_{m \times n} &  (1-p_\theta)\mathcal{F}A^2  \\
			\mathbf{0}_{\varpi \times n} & 	\mathbf{0}_{\varpi \times n}
		\end{bmatrix}_{(\varpi+m) \times 2n}\\\notag
		\mathcal{A}^a_{33} &= \begin{bmatrix}
			\mathbf{0}_{m\times\varpi}  & \mathbf{0}_{m \times m}  \\
			\mathbf{I}_{\varpi \times \varpi} & 	\mathbf{0}_{\varpi \times m}
		\end{bmatrix}_{(\varpi+m) \times (\varpi+m)}\\\notag
		\mathcal{A}^b_{33} &= \begin{bmatrix}
			\begin{bmatrix}\mathbf{I}_{m} &\mathbf{0}_{m \times (\varpi-m)}\end{bmatrix}_{m\times\varpi}  & \mathbf{0}_{m \times m}  \\
			\mathbf{0}_{\varpi \times \varpi} & 	\mathbf{0}_{\varpi \times m}
		\end{bmatrix}_{(\varpi+m) \times (\varpi+m)}\\
		\mathcal{A}_{34}^a \!&\!= \begin{bmatrix}   \mathcal{F}B \!\!&\!\!  \mathcal{F}AB \!\!&\!\! p_\theta \mathcal{F}A^{2}B \!\!&\!\! \dots\!\!&\!\! p_\theta \mathcal{F}A^{\tau}B\\\label{New-entries}
			\mathbf{0}_{\varpi \times m}\!\!&\!\! \mathbf{0}_{\varpi \times m}  \!\!&\!\! \mathbf{0}_{\varpi \times m} \!\!&\!\! \dots\!\!&\!\! \mathbf{0}_{\varpi \times m}
		\end{bmatrix}
\end{align}
\end{document}